\begin{document}

\title{Full-Duplex Operations in Wireless Powered Communication Networks
        \thanks{This paper has been presented in part at International Conference on Advanced Communication Technologies, Pyeongchang, Korea, July 1-3, 2015.}
        \thanks{The authors are with Electronics and Telecommunications Research Institute (e-mail: jugun@etri.re.kr, yurolee@etri.re.kr, aisma@etri.re.kr).}}

\author{Hyungsik Ju, Yuro Lee, and Tae-Joong Kim}

\maketitle

\IEEEpeerreviewmaketitle
\newtheorem{definition}{\underline{Definition}}[section]
\newtheorem{fact}{Fact}
\newtheorem{assumption}{Assumption}
\newtheorem{theorem}{\underline{Theorem}}[section]
\newtheorem{lemma}{\underline{Lemma}}[section]
\newtheorem{corollary}{Corollary}
\newtheorem{proposition}{\underline{Proposition}}[section]
\newtheorem{example}{\underline{Example}}[section]
\newtheorem{remark}{\underline{Remark}}[section]
\newtheorem{algorithm}{\underline{Algorithm}}[section]
\newcommand{\mv}[1]{\mbox{\boldmath{$ #1 $}}}

\begin{abstract}
In this paper, a wireless powered communication network (WPCN) consisting of a hybrid access point (H-AP) and multiple user equipment (UEs), all of which operate in full-duplex (FD), is described. We first propose a transceiver structure that enables FD operation of each UE to simultaneously receive energy in the downlink (DL) and transmit information in the uplink (UL). We then provide an energy usage model in the proposed UE transceiver that accounts for the energy leakage from the transmit chain to the receive chain. It is shown that the throughput of an FD WPCN using the proposed FD UEs can be maximized by optimally allocating the UL transmission time to the UEs by solving a convex optimization problem. Simulation results reveal that the use of the proposed FD UEs efficiently improves the throughput of a WPCN with practical self-interference cancellation (SIC) capability at the H-AP. With current SIC technologies reducing the power of the residual self-interference to the level of background noise, the proposed FD WPCN using FD UEs achieves 18$\%$ and 25 $\%$ of throughput gain as compared to the conventional FD WPCN using HD UEs and HD WPCN, respectively.
\end{abstract}

\section{Introduction}
As a new way to power mobile devices, there has been a growing interest in harvesting energy from far-field radio-frequency (RF) signal transmissions. In particular, the design of a wireless-powered communication network (WPCN) has been studied as a significant application of RF energy harvesting, where remote user equipment (UE) utilizes the energy harvested from a wireless RF power transfer for wireless communications. A typical WPCN model was proposed in [1], in which a wireless energy transfer (WET) in the downlink (DL) and wireless information transmission (WIT) in the uplink (UL) are both coordinated by a hybrid access-point (H-AP). In a WPCN in which the H-AP and UEs all operate in half-duplex (HD) mode, which is called an HD-WPCNs, a fundamental trade-off exists in allocating resources to the DL for a WET and the UL for a WIT \cite{Ju: HD-WPCN}. This is because allocating more resources to the DL for a WET increases the transmit power of the UEs in the UL thanks to the increase in the amount of harvested energy, while also decreasing the resources allocated to the UL for a WIT.

Meanwhile, full-duplex (FD) based wireless systems, in which the wireless nodes transmit and receive RF signals at the same time and in the same frequency band, have a great deal of attention. FD operation is expected to potentially double the spectral efficiency in wireless communications by cancelling the self-interference (SI) that occurs through the leakage of a transmitted signal that is received by the transmitting node itself. In particular, the feasibility of FD communication has been verified through a proof-of-concept (PoC) design, which shows that the power of the residual SI after self-interference cancellation (SIC) is applied is reduced to a level sufficiently close to that of background noise \cite{KUMU}, \cite{Chea}.

The simultaneous transmission and reception led by an FD operation can also improve the throughput of WPCN. In \cite{Ju: FD-WPCN}, a WPCN model was proposed in which an H-AP with FD operation coordinates a wireless energy transfer to, and a wireless information transmission from a set of UEs operating in HD mode, which is denoted as FD-WPCN with HD UEs (FD-WPCN-HD). It was shown in \cite{Ju: FD-WPCN} that the FD operation of an H-AP in an FD-WPCN-HD is able to increase throughput of a WPCN when the SI is sufficiently cancelled out. In \cite{FD-WPCN2_SumeSun}, the authors studied resource allocation in an FD-WPCN-HD using an energy causality constraint, which is a constraint in which the signal transmission of a UE at a given time can utilize only the energy harvested during a previous time. An FD-WPCN-HD with orthogonal frequency-division multiplexing (OFDM) modulation was furthermore studied in \cite{FD-WPCN3_LeeInKyu}. Finally, in \cite{FD-WPCN4_Rui}, wireless powered relay network was investigated in which an FD relay that utilizes SI as an energy source relays information of an HD source to an HD destination.

For this research, we studied another type of WPCN in which not only the H-AP but also the UEs operate in FD mode, denoted by FD-WPCN with FD UEs (FD-WPCN-FD). Thanks to the FD capability, UEs can receive the signal sent by the H-AP to carry energy while transmitting their respective UL information. We first propose a transceiver structure for UEs that enables simultaneous energy reception in the DL and information transmission in the UL. By extending our previous work \cite{Ju: FD-FD-WPCN}, we characterize the energy usage in the proposed UE transceiver, accounting for an energy leakage from the transmit chain to the receive chain, and show that the UEs can also harvest energy from the leakage of their own UL transmissions (in other words, SI) as well as the received signal sent by the H-AP in the DL. Finally, based on a time-division-multiple-access (TDMA) protocol for WIT in the UL, the optimal time allocation to maximize the weighted sum-throughput in the network, subject to a given total time constraint, is described.

The rest of this paper is organized as follows. Section II presents the system model of FD-WPCN-FD and the proposed UE strucutre. Section III studies throughput of the FD-WPCN-FD and resource allocation to maximize the throughput. Section IV provides simulation results on the throughput of the FD-WPCN-FD. Finally, Section V concludes this paper.

\section{System Model}\label{sec:SystemModel}
Fig. \ref{Fig_SystemModel} shows a FD-WPCN-FD model considered in this paper. This network consists of one H-AP and $K$ UEs (denoted by $U_i$, $i = 1, \cdots, K$), all of which are equipped with a single-antenna FD transceiver to transmit and receive signals over the same frequency band simultaneously.

\begin{figure}[!t]
   \centering
   \includegraphics[width=1.0\columnwidth]{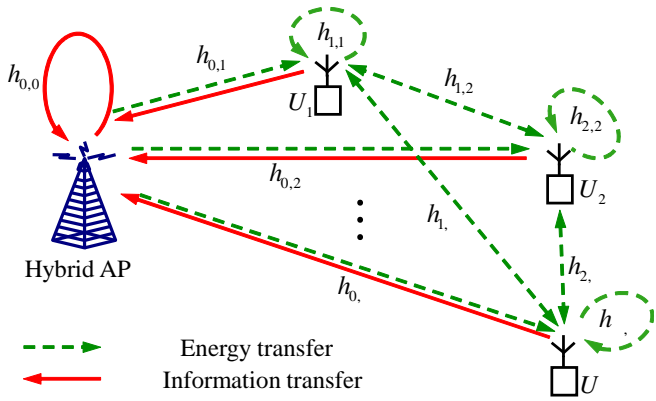}
   \caption{A WPCN model with an FD H-AP and FD UEs.}
   \label{Fig_SystemModel}
\end{figure}

The H-AP transfers energy to the UEs in the DL by broadcasting a signal dedicated to carrying energy. At the same time, the H-AP also receives signals transmitted by the UEs in the UL, which contains information for decoding. The FD operation of the H-AP enables the simultaneous transmission/ reception of energy/information signals in the DL/UL, respectively. For decoding information received in the UL, an SI canceller is deployed at the FD transceiver of the H-AP to eliminate SI (in other words, the leakage of a DL energy signal transmission received by itself), which interferes with the reception of the UL information signals.

On the other hand, UEs harvest energy from the received signals and charge their respective battery. A portion of the charged energy is used to transmit information in the UL. Thanks to the FD operation, UEs are also capable of simultaneously transmitting/receiving information/energy signals in the UL/DL, respectively. This is enabled by the transceiver shown in Fig. \ref{Fig_UE_Transceiver}. Note that the energy signal broadcast by the H-AP in the DL does not carry any information for decoding, but only energy. Therefore, neither an information decoder nor an SI canceller is deployed at the receiving end of a UE's transceiver. Instead, an energy harvester (for example, a rectifier) is installed at the receiving end to harvest energy from the received signals.

It is worth noting that in this transceiver, circulator loss (duplexer loss) at the transceiver yields that a portion of the power amplifier (PA) output energy at the transmitting end leaks into the receiving end as SI, whereas the remaining portion of the PA output energy is used to transmit information to the H-AP. At the $i$-th user $U_i$, as shown in Fig. \ref{Fig_UE_Transceiver}, $E_i$, $E_i^l$, and $E_i^t$ represent the PA output energy at the transmitting end, energy leaking into the receiving end as SI, and energy used to transmit information, respectively, where $E_i = E_i^l + E_i^t$. For convenience, $E_i^l$ and $E_i^t$ are given, respectively, by
\begin{equation}\label{Eq_EnergyUsage}
   E_i^l = {\varphi _i}{E_i} \,\,\,\, {\rm{and}} \,\,E_i^t = (1 - {\varphi _i}){E_i}, \,\,\,\, 0 \le {\varphi _i} \le 1,\footnote{When the transmit and receive antennas of a UE is separated, $\varphi_i$ models the near-field energy absorption by the receiving antenna.}
\end{equation}
where $\varphi_i$ represents the circulator loss at the transceiver of Ui. At the proposed transceiver, there is a trade-off yielded by the value of $\varphi_i$ because as $\varphi_i$ increases, the energy harvested from SI increases whereas the energy used for WIT decreases.

\begin{figure}
   \centering
   \includegraphics[width=1.0\columnwidth]{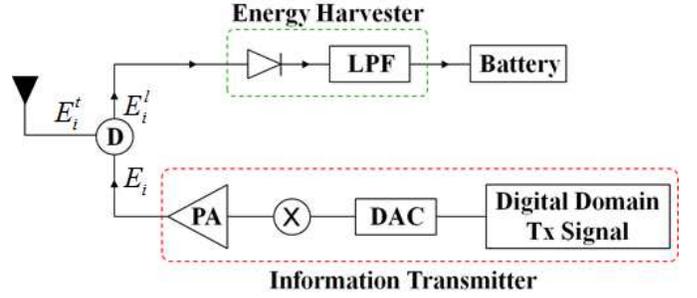}
   \caption{Transceiver structure for FD UEs.}
   \label{Fig_UE_Transceiver}
\end{figure}

In the network shown in Fig. \ref{Fig_SystemModel}, a signal transmission from $U_i$ to $U_j$, $\forall i, j \in \{0, \cdots, K\}$, is assumed to pass through a complex channel $h_{i,j}$ with the channel power gain given by $H_{i,j} = |h_{i,j}|^2 $ ($U_0$  represents the H-AP later in this paper). In particular, when $i = j$, $h_{i,i}$ represents the SI channel at $U_i$ through which the leakage signal of $U_i$ is received by itself. It is assumed that the channels remain constant during a block transmission time denoted by T (in other words, following quasi-static flat-fading), and $h_{i,j} = h_{j,i}$ with the channel reciprocity. The H-AP is assumed to have perfect knowledge of $h_{0,i}$, $\forall i \in \{0, \cdots, K\}$. Furthermore, $\varphi_i$, $\forall i \in \{0, \cdots, K\}$, is also assumed to be known by the H-AP.

Fig. \ref{Fig_FrameStructure} shows a transmission protocol for FD-WPCN-FD. During the whole block duration $T$, the H-AP transmits an energy signal with constant power $P_0$ in the DL, which is received by all UEs. While broadcasting energy by the H-AP, the UEs transmit their own independent information to the H-AP in the UL orthogonally over time through a TDMA. The transmission block therefore consists of $K$ slots, each of which is allocated to $U_i$ for the UL information transmission. The duration of the $i$-th slot is given by $\tau_i T$, $i = 1, \cdots, K$, where
\begin{equation}\label{Eq_SumTime}
   {\sum\limits_{i = 1}^{K} {{\tau _{i}}} \le 1, \,\,\,\, 0 \le \tau_i \le 1. }
\end{equation}

During the $i$-th slot, the H-AP and user $U_i$ transmit a DL signal to carry energy and a UL signal to carry information, respectively. Therefore, the received signals at the H-AP and $U_i$, $i \ne 0$, during the $i$-th slot can be expressed, respectively, as
\begin{equation}\label{Eq_RxSignal_AP}
   y_{i,0} = \sqrt{P_i}h_{0,i}x_i + \sqrt{P_0}h_{0,0}x_0 + n_0,
\end{equation}
\begin{equation}\label{Eq_RxSignal_UE_i}
   y_{i,i} = \sqrt{P_0}h_{0,i}x_0 + \sqrt{P_i}h_{i,i}x_i + n_i,
\end{equation}
with $x_k$, $\forall k = \{0, \cdots, K\}$, denoting the transmitted signal of the $U_k$ with $\mathbb E [|x_k|^2] = 1$, where $\mathbb E [\mu] = 1$ represents the expectation of a random variable $\mu$. Furthermore, $P_k$ represents the transmit power of $U_k$, $\forall k = \{0, \cdots, K\}$. Finally, $n_k$ denotes the receiver noise at $U_k$, $\forall k = 0, \cdots, K$. It is assumed that $n_k \sim \mathcal{CN} (0, \sigma_i^2)$, with $\mathcal{CN} (\nu, \sigma^2)$ representing a circularly symmetric complex Gaussian random variable with mean $\nu$ and variance $\sigma^2$. It is worth noting that the H-AP has to decode $x_i$ in (\ref{Eq_RxSignal_AP}), whereas $U_i$ does not decode information in the DL. At their respective receiving ends, therefore, the H-AP has to cancel the leakage of its own transmission acting as SI (the second term in (\ref{Eq_RxSignal_AP})), whereas it is not necessary for $U_i$ to cancel the leakage of its own transmission (the second term in (\ref{Eq_RxSignal_UE_i})).

\begin{figure}[!t]
   \centering
   \includegraphics[width=1.0\columnwidth]{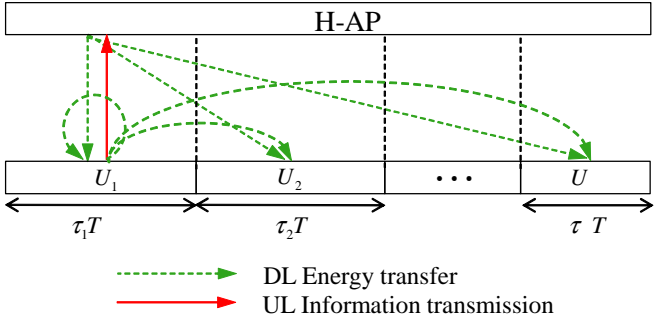}
   \caption{Transmission protocol for FD-WPCN-FD.}
   \label{Fig_FrameStructure}
\end{figure}

In addition, the inactive UEs during the $i$-th slot, in other words, $U_j$ with $j \ne 0$ and $j \ne i$, also receive signals from the transmissions of the H-AP and $U_i$. At an inactive UE $U_j$, $j \ne i$, therefore, the received signal during the $i$-th slot is given by 	
\begin{equation}\label{Eq_RxSignal_UE_j}
   y_{i,j} = \sqrt{P_0}h_{0,j}x_0 + \sqrt{P_i}h_{i,j}x_i + n_j.
\end{equation}

\section{Throughput of FD-WPCN with FD UEs}\label{sec:Throughput}
In this section, the achievable throughput of the FD-WPCN-FD presented in Section \ref{sec:SystemModel} is described. In particular, we aim to maximize the weighted sum-throughput of information transmissions in the UL.

   \subsection{Achievable UL Throughput of Each UE}\label{AchievableRates}
   According to the knowledge of channel available at the H-AP, the throughput of the proposed FD-WPCN-FD can be obtained with two approaches. The first assumes a full knowledge of channels by an aid of a \emph{genie}, whereas the other assumes partial channel knowledge in more practical sense.
   
      \subsubsection{Ideal throughput achievable by a genie-aided approach}
      A UE $U_i$ can harvest energy from the received DL signals transmitted by the H-AP (in other words, the first term in (\ref{Eq_RxSignal_UE_i})) during the whole block duration. In addition, $U_i$ can also harvest energy from the leakage of its own transmission during the $i$-th slot allocated to it for a UL information transmission (in other words, the second term in (\ref{Eq_RxSignal_UE_i})). Furthermore, UL transmissions of other UEs received by $U_i$ can also be used for energy harvesting (EH) of $U_i$. In particular, the energy harvested from other UEs' transmissions can be taken into account to optimize throughput when the H-AP is assumed to have a global knowledge of channels, in other word, $H_{i,j}$, $i, j \in \{0, \cdots, K\}$, $i \ne j$, $i \ne 0$, $j \ne 0$, as well as $h_{0,i}$ and $\varphi_i$, $\forall i \in \{0, \cdots, K\}$. In this case, the harvested energy of $U_i$ during a block duration $T$, denoted by $E_i^h$, can be obtained from (\ref{Eq_RxSignal_UE_i}) and (\ref{Eq_RxSignal_UE_j}) as
      \begin{equation}\label{Eq_Energy_Ui}
         E_i^h
             = \zeta_i \left( {{P_0}H_{0,i} + {\tau _i}{P_i}H_{i,i} + \sum\limits_{\scriptstyle j = 1 \hfill \atop
             \scriptstyle j \ne i \hfill}^K {\tau_j P_j H_{j,i}} } \right)T,
      \end{equation}
      where $0 \le \zeta_i le 1$, $i = 1, \cdots, K$, denotes the energy harvesting efficiency at $U_i$. The energy harvesting efficiency accounts for RF-direct current (DC) conversion efficiency of a rectifier because it represents the ratio of the energy of the received signal (RF) converted into the energy harvested and stored (DC). In (\ref{Eq_Energy_Ui}), assuming that the amount of energy harvested from the receiver noise is small enough to be neglected, such energy is not considered \cite{Ju: HD-WPCN}. In the FD UE transceiver shown in Fig. \ref{Fig_UE_Transceiver}, it is worth noting that, based on the law of energy conservation, the energy harvested from the leakage of its own transmission, in other words, $\tau_i P_i H_{i,i} T$ in (\ref{Eq_Energy_Ui}), is equivalent to the amount of energy leaking from the transmitting end into the receiving end. From (\ref{Eq_EnergyUsage}), we therefore have
      \begin{equation}\label{Eq_LeakageEnergy}
         \tau_i P_i H_{i,i} T = E_i^l = \varphi_i E_i.
      \end{equation}
      
      In each transmission block, a portion of the harvested energy is used to generate the PA output signal at the transmitting end of the transceiver, as shown in Fig. \ref{Fig_UE_Transceiver}. The energy of the PA output signal at $U_i$ is then given by
      \begin{equation}
         E_i = \eta_i E_i^h,
      \end{equation}
      with $0 \le \eta_i \le 1$, $i = 1, \cdots, K$, denoting the portion of the harvested energy used to generate a PA output signal at $U_i$ under a steady state.\footnote{In practical scenarios, energy causality can also be considered as studied in \cite{FD-WPCN2_SumeSun}. To focus on the FD operation at UEs, however, we do not consider the energy causality in this study by assuming that the batteries of $U_i$'s are charged sufficiently at the beginning of each block and $U_i$'s use their respective harvested energy such that the energy charged in the battery at the end of each block reaches its initial state.} Note that $E_i^t$ in (\ref{Eq_EnergyUsage}) can be represented as
      \begin{equation}\label{Eq_E_i_t}
         E_i^t = \tau_i P_i T = \left( 1 - \varphi_i \right) E_i.
      \end{equation}
      From (\ref{Eq_Energy_Ui}) through (\ref{Eq_E_i_t}), we then have 
      \[
         E_i^h
          = \zeta_i \left( { {P_0}H_{0,i} +  \frac{\varphi_i \tau_i P_i }{1 - \varphi_i}
            + \sum\limits_{\scriptstyle j = 1 \hfill \atop \scriptstyle j \ne i \hfill}^K {\tau_j P_j H_{j,i}} } \right) T
      \]
      \begin{equation}\label{Eq_Energy_Ui_new2}
         = \frac{\tau_i P_i T}{\eta_i \left( 1 - \varphi_i \right)}, \,\,\,\,\,\,\,\,\,\,\,\,\,\,\,\,\,\,\,\,\,\,\,\,\,\,\,\,\,\,\,\,\,\,\,\,\,\,\,\,\,\,\,\,\,\,\,\,\,\,\,\,\,\,\,\,\,\,\,\,\,\,\,\,
      \end{equation}
      from which we have 
      \begin{equation}\label{Eq_Energy_Power_transfrom}
         \left(\frac{1 - \theta_i \varphi_i}{1 - \varphi_i}\right) \tau_i P_i - \theta_i \sum\limits_{\scriptstyle j = 1 \hfill \atop \scriptstyle j \ne i \hfill}^K {\tau_j P_j H_{j,i}} = \theta_i P_0 H_{0,i},
      \end{equation}
      where $\theta_i = \eta_i \zeta_i$. It then follows that 
      \begin{equation}\label{Eq_TransmissionEnergy_Matrix}
         {\bf{A}}{\boldsymbol{\Omega}}{\bf{P}} = P_0 \bf{b},
      \end{equation}
      where $P = [P_1, P_2, \cdots P_K]^T$ and $\boldsymbol{\Omega}$ $=$ $\rm{diag} \{[\tau_1, \tau_2, \cdots, ¥óK]^T\}$ with $\rm{diag} \{\bf{v}\}$ denoting a diagonal matrix, of which the diagonal entries consist of vector $\bf{v}$. Furthermore, $\bf{A}$ is a matrix with
      \begin{equation}\label{Eq_Matrix_A}
         {A_{i,j}} = \left\{ {\begin{array}{*{20}{c}}
         {\left(1 - {\theta _i}{\varphi_i}\right) / \left( 1 - \varphi_i \right)}  \\
         { - {\theta _i}{H_{i,j}}}  \\
         \end{array}\begin{array}{*{20}{c}}
         {,\,\,\, \,\,\, {\rm{if}}\,\,\,j = i \,\,\, }  \\
         {,\,\,\,{\rm{otherwise,}}}  \\
         \end{array}} \right.
      \end{equation}
      where $A_{i,j}$ denotes the element of matrix {\bf{A}} on the $i$-th row and $j$-th column. Finally, $\bf{b}$ is a vector given by
      \begin{equation}\label{Eq_Vector_Beta}
         {\bf{b}} = \left[ \theta_1 H_{0,1} \,\, \theta_2 H_{0,2} \,\, \cdots \,\, \theta_K H_{0,K} \right]^T.
      \end{equation}
      From (\ref{Eq_TransmissionEnergy_Matrix}) through (\ref{Eq_Vector_Beta}), $P_i$ is then given by 
      \begin{equation}\label{Eq_Pi}
         P_i = \rho_i \frac{P_0}{\tau_i}, \,\,\, i = 1, \,\, \cdots, \,\, K,
      \end{equation}
      where $\rho_i$, $i = 1, \cdots , K$, is given by 
      \begin{equation}\label{Eq_Rho}
         \left[\rho_1 \,\, \rho_2 \,\, \cdots \,\, \rho_K\right] = {\bf{A}}^{-1} {\bf{b}}.
      \end{equation}
      
      Note that at the H-AP, the residual SI after SIC can be approximated as $I_0 \sim \mathcal{CN} (0, \alpha P_0)$, where $¥á \ll 1$ \cite{Day}. Given the time allocation $\boldsymbol{\tau} = [\tau_1, \cdots, \tau_K]$, the achievable UL throughput of $U_i$ during the $i$-th slot is then given from (\ref{Eq_RxSignal_AP}) and (\ref{Eq_Pi}) by
      \[
         R_i \left( {\boldsymbol{\tau}} \right) = \tau_i \log_2 \left( 1 + \frac{H_{0,i} P_i}{\Gamma \left(\sigma_0^2 + \alpha P_0\right)} \right) \,\,\,\,\,\,\,\,\,\,\,\,\,\,\,\,\,\,\,\,\,\,\,\,\,\,\,\,\,\,\,\,\,\,\,\,\,\,\,\,\,\,
      \]
      \begin{equation}\label{Eq_UL_Rate}
         \,\,\,\, = \tau_i \log_2 \left( 1 +  \frac{\gamma_i \left( P_0 \right)}{\tau_i} \right), \,\,\, i = 1, \,\, \cdots, \,\, K,
      \end{equation}
      where $\gamma_i$ is given by
      \begin{equation}\label{Eq_gamma}
         \gamma_i \left( P_0 \right) = \frac{\rho_i H_{i,0} P_0 }{\Gamma \left( \sigma_0^2 + \alpha P_0 \right)}, \,\,\, i = 1, \,\, \cdots, \,\, K,
      \end{equation}
      with $\rho_i$, $i = 1, \cdots , K$, given in (\ref{Eq_Rho}), and $\Gamma$ representing the gap of signal-to-interference-plus noise ratio (SINR) from the additive white Gaussian noise (AWGN) channel owing to practical modulation and coding used.
      
      \subsubsection{Practical throughput achievable by partial knowledge of channels}
      In practice, it is difficult for the H-AP to obtain the knowledge of the channels between different UEs, in other words, $H_{i,j}$, $i, j \in \{0, \cdots, K\}$, $i \ne j$, $i \ne 0$, $j \ne 0$. At a UE, furthermore, energy harvested from the transmissions of other UE is small enough to be negligible \cite{Ju: HD-WPCN}, \cite{Ju: FD-WPCN}. Therefore, a practical throughput of a FD-WPCN-FD can be obtained by assuming that the H-AP only has the knowledge of $h_{0,i}$, and $\varphi_i$, $\forall i \in \{0, \cdots, K\}$, and by ignoring the energy harvested from the transmissions of other UEs, in other words,
      \begin{equation}\label{Eq_HarvestedEnergy_otherUE}
         \sum\limits_{j = 1,\,\,j \ne i}^K  {\tau _j}{P_j}{H_{j,i}}T\,\, = 0,
      \end{equation}
      in (\ref{Eq_Energy_Ui}). In this case, the amount of harvested energy at $U_i$ in (\ref{Eq_Energy_Ui_new2}) is modified as
      \begin{equation}\label{Eq_HarvestedEnergy_Prac}
         E_i^h = {\zeta _i}\left( {{P_0}{H_{0,i}} + {\tau _i}{P_i}{H_{i,i}}} \right)T,
      \end{equation}
      By the same procedure from (\ref{Eq_LeakageEnergy}) through (\ref{Eq_TransmissionEnergy_Matrix}) with $E_i^h$ replaced by (\ref{Eq_HarvestedEnergy_Prac}), the matrix $\bf{A}$ described in (\ref{Eq_Matrix_A}) is then modified as a diagonal matrix given by
      \begin{equation}\label{Eq_Matrix_A_Prac}
         {A_{i,j}} = \left\{ {\begin{array}{*{20}{c}}
         {\left( {1 - {\theta _i}{\varphi _i}} \right)/\left( {1 - {\varphi _i}} \right)}\\
         0
         \end{array}\begin{array}{*{20}{c}}
         {,{\kern 1pt} {\kern 1pt} {\kern 1pt} {\kern 1pt} \,\,{\kern 1pt} {\rm{if}}{\kern 1pt} {\kern 1pt} {\kern 1pt} j = i\,\,\,,}\\
         {,{\kern 1pt} {\kern 1pt} {\kern 1pt} {\rm{otherwise}}{\rm{.}}}
         \end{array}} \right.
      \end{equation}
      Accordingly, $\rho_i$ , $i = 1, \cdots, K$, in (\ref{Eq_Pi}) is obtained by (\ref{Eq_Rho}) and (\ref{Eq_Matrix_A_Prac}) as
      \begin{equation}\label{Eq_Rho_Prac}
         {\rho _i} = \frac{{\left( {1 - {\varphi _i}} \right)}}{{\left( {1 - {\theta _i}{\varphi _i}} \right)}}{\theta _i}{H_{0,i}}.
      \end{equation}
      The throughput of FD-WPCN-FD in this practical sense is then given by (\ref{Eq_UL_Rate}) and (\ref{Eq_gamma}) where $\rho_i$ in (\ref{Eq_gamma}) is replaced by that in (\ref{Eq_Rho_Prac}). As clearly shown in (\ref{Eq_UL_Rate}), (\ref{Eq_gamma}), and (\ref{Eq_Rho_Prac}), throughput of the FD-WPCN-FD increases with $\theta_i$, in other words, increase of $\eta_i$ or $\zeta_i$.

   \subsection{Weighted Sum-Throughput Maximization}\label{TimeAllocation}
   As shown in (\ref{Eq_UL_Rate}), the throughput of the FD-WPCN-FD can be maximized by optimizing the time allocated to each UE for a UL information transmission, denoted by $\boldsymbol{\tau} = [\tau_1, \cdots, \tau_K]$. Specifically, we maximize the weighted sum-throughput in this network by solving the problem formulated as follows:
   \begin{equation}\label{Eq_Prop_Primal}
      {\rm{(P1)}}: \,\,\,\, \mathop {\max }\limits_{\boldsymbol{\tau}}  \,\,\,\sum\limits_{i = 1}^K { \omega_i {R_i}\left( \boldsymbol{\tau}  \right)} \,\,\,\,\,\,\,\,\,\,\,\,\,\,\,\,\,\,\,\,
   \end{equation}
   \begin{equation}\label{Eq_Prop_Constraint1}
      {\rm{s.t}} \,\,\,\,\, \sum\limits_{i=1}^K {\tau_i} \le 1,
   \end{equation}
   \begin{equation}\label{Eq_Prop_Constraint2}
      \,\,\,\,\,\,\,\,\,\,\,\,\,\,\,\,\,\,\,\,\,\,\,\,\,\,\,\,\,\,\,\,\,\,\,\,\,\,\,\,\,\,\,\,\,\,\, \tau_i > 0, \,\,\, i = 1, \,\, \cdots, \,\, K.
   \end{equation}
   
   It can be easily seen that problem (P1) is a convex optimization problem because $R_i (\boldsymbol{\tau})$ and (\ref{Eq_Prop_Constraint1}) are concave and affine functions of $\boldsymbol{\tau}$, respectively. The optimal time allocation solution for (P1), denoted by $\boldsymbol{\tau}^*$, is then obtained as follows.   

   \begin{proposition}\label{Proposition_Opt_Time_Alloc}
      The weighted sum-throughput of the FD-WPCN-FD is maximized by the optimal time allocation solution for (P1), given by ${\boldsymbol{\tau}}^* = [\tau_1^* \,\, \cdots \,\, \tau_K^*]$ with
      \begin{equation}\label{Eq_Prop_Opt_tau}
         \tau_i^* = \frac{\gamma_i \left( P_0 \right)}{z_i^*}, \,\,\, i = 1, \,\, \cdots, \,\, K,
      \end{equation}
      where $z_i^*$ denote the solution of $f\left( z \right) = \frac{\lambda^* \ln 2}{\omega_i}$, with $f\left( z \right)$ defined as
      \begin{equation}\label{Eq_function_f}
         f\left( z \right) \buildrel \Delta \over = \ln \left( 1 + z \right) - \frac{z}{1 + z}.
      \end{equation}
   \end{proposition}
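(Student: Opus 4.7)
The problem (P1) is convex with a single affine inequality constraint, and Slater's condition is trivially satisfied (e.g., by $\tau_i = 1/(2K)$ for all $i$), so the KKT conditions are both necessary and sufficient. My plan is to attach a multiplier $\lambda \ge 0$ to the budget constraint (\ref{Eq_Prop_Constraint1}) and temporarily set aside the positivity constraints (\ref{Eq_Prop_Constraint2}); they will be verified a posteriori from the form of the candidate solution. The Lagrangian is
\begin{equation*}
\mathcal{L}(\boldsymbol{\tau}, \lambda) = \sum_{i=1}^K \omega_i \tau_i \log_2\!\left(1 + \frac{\gamma_i(P_0)}{\tau_i}\right) - \lambda\!\left(\sum_{i=1}^K \tau_i - 1\right).
\end{equation*}

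The central step is recognizing that the function $f(z)$ appearing in (\ref{Eq_function_f}) emerges directly from the stationarity condition. Rewriting in natural logarithms and applying the product/chain rule,
\begin{equation*}
\frac{\partial}{\partial \tau_i}\!\left[\tau_i \log_2\!\left(1 + \frac{\gamma_i(P_0)}{\tau_i}\right)\right] = \frac{1}{\ln 2}\!\left[\ln\!\left(1 + \frac{\gamma_i(P_0)}{\tau_i}\right) - \frac{\gamma_i(P_0)/\tau_i}{1 + \gamma_i(P_0)/\tau_i}\right].
\end{equation*}
Substituting $z_i = \gamma_i(P_0)/\tau_i$, the bracketed quantity is exactly $f(z_i)$ in (\ref{Eq_function_f}), and $\partial \mathcal{L}/\partial \tau_i = 0$ collapses to $\omega_i f(z_i)/\ln 2 = \lambda$, i.e., $f(z_i) = \lambda \ln 2/\omega_i$. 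Inverting the substitution yields $\tau_i^* = \gamma_i(P_0)/z_i^*$, exactly as claimed in (\ref{Eq_Prop_Opt_tau}).

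Finally, I would establish uniqueness and pin down $\lambda^*$. A short calculation gives $f'(z) = z/(1+z)^2 > 0$ on $(0,\infty)$ with $f(0) = 0$ and $f(z) \to \infty$ as $z \to \infty$, so $f$ is a strict bijection of $[0,\infty)$ onto itself; hence for each $\lambda > 0$ and each $\omega_i > 0$ the defining equation has a unique positive root $z_i^*(\lambda)$, and in particular $\tau_i^* = \gamma_i(P_0)/z_i^* > 0$, so the constraints (\ref{Eq_Prop_Constraint2}) that I set aside are indeed inactive. Because the objective is strictly increasing in each $\tau_i$ (the partial derivative $\omega_i f(\gamma_i/\tau_i)/\ln 2$ is positive), the budget constraint must be tight at optimum, so $\lambda^*$ is determined implicitly by $\sum_{i=1}^K \gamma_i(P_0)/z_i^*(\lambda^*) = 1$; strict monotonicity of each $\tau_i^*$ in $\lambda$ (larger $\lambda$ gives larger $z_i$ hence smaller $\tau_i$), together with the limits $\tau_i^* \to \infty$ as $\lambda \to 0^+$ and $\tau_i^* \to 0$ as $\lambda \to \infty$, guarantees existence and uniqueness of $\lambda^* > 0$. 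The only non-routine step is spotting the algebraic form $f(z)$ hidden in the marginal rate; once that identification is made, the remainder is standard Lagrangian analysis together with a one-dimensional monotone root finding.
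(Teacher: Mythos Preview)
Your proof is correct and follows essentially the same approach as the paper: form the Lagrangian with a single multiplier on the budget constraint, invoke convexity and Slater's condition so that KKT is necessary and sufficient, differentiate to obtain $f(z_i)=\lambda^*\ln 2/\omega_i$ with $z_i=\gamma_i(P_0)/\tau_i$, and then use the strict monotonicity of $f$ together with $f(0)=0$, $f(\infty)=\infty$ to conclude uniqueness. Your treatment is in fact slightly more careful than the paper's, since you explicitly verify that the positivity constraints are inactive and give a clean monotonicity argument for the existence and uniqueness of $\lambda^*$.
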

   \begin{proof}
      The proof is shown in Appendix \ref{Proof_Proposition}.
   \end{proof}
   
   Figure \ref{Fig_RateRegion} compares the achievable throughput region of the FD-WPCN-FD against those of the HD-WPCN and FD-WPCN-HD studied in \cite{Ju: FD-WPCN}, with $K = 2$, $P_0 = 20$ dB, $\sigma_0^2 = 0$ dB, and $\Gamma = 1$. Duplex modes of the H-AP and UEs in the HD-WPCN, FD-WPCN-HD, and FD-WPCN-FD are summarized in Table \ref{Table_DuplexModes}. It is assumed that $H_{0,1} = 0.50$, $H_{0,2} = 0.15$, and $H_{1,2} = H_{2,1} = 0.01$. For the FD-WPCN-HD, SIC at the H-AP is assumed to be perfect. For FD UEs in the FD-WPCN-FD, it is assumed that $\varphi_1 = \cdots = \varphi_K = 0.03$ with 15 dB of isolation between transmission and reception. Considering practical difficulty for the H-AP to obtain the knowledge of channels between different UEs, for all considered WPCNs we assume that the H-AP has the knowledge of $H_{0,1}$ and $H_{0,2}$ only, and the energy harvested from the transmissions of other UEs is thus ignored. In case of FD-WPCN-FD, however, the ideal throughput obtained by (\ref{Eq_Energy_Ui}) through (\ref{Eq_gamma}) with the knowledge of $H_{1,2}$ and $H_{2,1}$ is also presented for the purpose of comparison.

   \begin{figure}[!t]
      \centering
      \includegraphics[width=1.0\columnwidth]{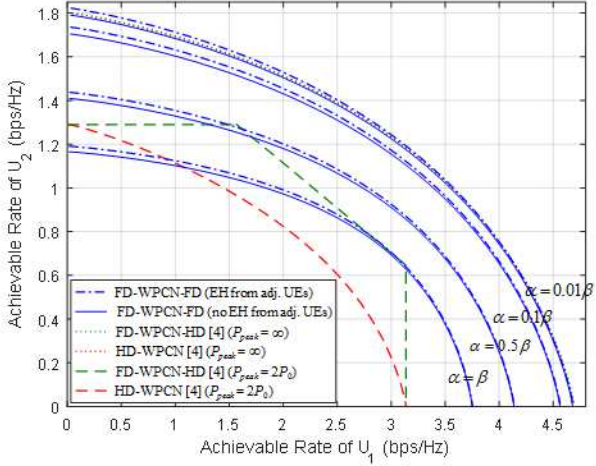}
      \caption{Rate regions of WPCNs according to duplex modes.}
      \label{Fig_RateRegion}
   \end{figure}

   \begin{table}\center\abovecaptionskip -0.6\baselineskip
      \caption{Comparison of duplex modes in various WPCNs}\center
      \begin{tabular}{|c|c|c|}
         \hline
         \hline
            & H-AP & UE \\                
         \hline
            HD-WPCN & HD (TDD) & HD (TDD) \\
         \hline
            FD-WPCN-HD & FD & HD (TDD) \\
         \hline
            FD-WPCN-FD & FD & FD \\           
         \hline
         \hline
      \end{tabular}
      \label{Table_DuplexModes}
   \end{table}

   It is shown from this figure that the throughput region of the FD-WPCN-FD is enlarged as $\alpha$ decreases (in other words, SIC performance improves). In addition, the throughput of FD-WPCN-FD is observed to be increased by taking the energy harvested from the other UEs' transmissions into account based on full knowledge of channels, as compared to the counterpart that ignores this energy. However, the increase is insignificant.
   
   This figure also compares the throughput region of FD-WPCN-FD obtained by (\ref{Eq_UL_Rate}) through (\ref{Eq_Rho_Prac}) with those of HD-WPCN, FD-WPNC-HD, when the energy harvested from other UEs' transmissions is not considered. It is observed from this figure that the throughput region of the FD-WPCN-FD when $\alpha = 0.5\beta$ with $\beta = \sigma_0^2 / P_0$ is shown to be larger than those of HD-WPCN and FD-WPCN-HD with $P_{\rm{peak}} = 2P_0$, where $P_{\rm{peak}}$ denotes the maximum peak transmit power at the H-AP in HD-WPCN and FD-WPCN-HD. When $\alpha = 0.01\beta$, in addition, the throughput region of the FD-WPCN-FD approaches those of HD-WPCN and FD-WPCN-HD with $P_{\rm{peak}} = \infty$, which are theoretically maximum achievable throughput regions of HD-WPCN and FD-WPCN-HD.

   \begin{corollary}\label{Corollary_Time_Alloc_SumRateMax}
      The sum-throughput of the FD-WPCN-FD is maximized with a time allocation ${\boldsymbol{\tau}}^{\star} = [\tau_1^{\star} \,\, \cdots \,\, \tau_K^{\star}]$, with
      \begin{equation}\label{Eq_Corr_Opt_tau}
         \tau_i^{\star} = \frac{\gamma_i \left( P_0 \right) }{\sum\limits_{j=1}^K {\gamma_j \left( P_0 \right)} }, \,\,\, i = 1, \,\, \cdots, \,\, K.
      \end{equation}
   \end{corollary}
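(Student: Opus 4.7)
The plan is to derive the corollary directly from Proposition \ref{Proposition_Opt_Time_Alloc} by specializing the weights to $\omega_1 = \omega_2 = \cdots = \omega_K = 1$ (sum-throughput maximization). Under this specialization, the defining equation $f(z_i^*) = \lambda^* \ln 2 / \omega_i$ reduces to $f(z_i^*) = \lambda^* \ln 2$, which is the \emph{same} equation for every index $i$.

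Next I would show that $f(z) = \ln(1+z) - z/(1+z)$ is strictly increasing on $(0, \infty)$ by a short derivative computation, namely $f'(z) = 1/(1+z) - 1/(1+z)^2 = z/(1+z)^2 > 0$. Strict monotonicity implies injectivity, so the common equation $f(z_i^*) = \lambda^* \ln 2$ admits a unique positive solution $z^*$, and hence $z_1^* = \cdots = z_K^* = z^*$. Substituting into $\tau_i^* = \gamma_i(P_0)/z_i^*$ from Proposition \ref{Proposition_Opt_Time_Alloc} gives $\tau_i^\star = \gamma_i(P_0)/z^*$.

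It remains to identify $z^*$ in closed form. For this I would argue that the total-time constraint (\ref{Eq_Prop_Constraint1}) is active at the optimum: the partial derivative $\partial R_i/\partial \tau_i = f(\gamma_i(P_0)/\tau_i)/\ln 2 > 0$ for all $\tau_i > 0$, so the objective is strictly increasing in each $\tau_i$, forcing $\sum_{i=1}^K \tau_i^\star = 1$. Plugging $\tau_i^\star = \gamma_i(P_0)/z^*$ into this equality yields $z^* = \sum_{j=1}^K \gamma_j(P_0)$, which upon re-substitution gives precisely (\ref{Eq_Corr_Opt_tau}).

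The only mildly delicate step is the monotonicity/tightness argument used to eliminate $\lambda^*$, but both pieces follow from one-line calculus: $f'(z) > 0$ on $(0,\infty)$, and $R_i(\boldsymbol{\tau})$ is strictly increasing in $\tau_i$ by the same function $f$. Everything else is routine substitution into the formulas already established in Proposition \ref{Proposition_Opt_Time_Alloc}.
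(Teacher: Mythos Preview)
Your proposal is correct and follows essentially the same route as the paper: specialize Proposition~\ref{Proposition_Opt_Time_Alloc} to equal weights, use monotonicity of $f$ to force $z_1^*=\cdots=z_K^*$, and then invoke tightness of the sum constraint to pin down the common value as $\sum_j \gamma_j(P_0)$. The only cosmetic difference is that the paper appeals to $\lambda^*>0$ (established in the KKT analysis of Appendix~\ref{Proof_Proposition}) to obtain $\sum_i \tau_i^\star = 1$, whereas you argue tightness directly from $\partial R_i/\partial \tau_i > 0$; both are valid and amount to the same observation.
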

   \begin{proof}
      The proof is shown in Appendix \ref{Proof_Corollary}.
   \end{proof}

\section{Simulation Results}\label{SimulationResult}
In this section, the sum-throughput of the HD-WPCN, FD-WPCN-HD, and FD-WPCN-FD are compared through simulations, with $K = 10$ and $\theta_i = 0.5$, $\forall i = 1, \cdots, K$. Duplex modes of the H-AP and UEs in the aforementioned WPCNs follow those described in Table \ref{Table_DuplexModes}. We compare the practical throughput of these WPCNs by ignoring the energy harvested from the transmissions of other UEs. For the FD-WPCN-FD, however, we also present the ideal throughput taking the energy harvested from the transmissions of other UEs into account for the purpose of comparison. The bandwidth and noise spectral density are set as 1 MHz and -160 dBm/Hz, respectively. The UEs are assumed to be uniformly distributed within two concentric circles, with diameters of 5 and 10 m, respectively, where the H-AP is located at the center of these concentric circles. Assuming that the average signal power attenuation at a reference distance of 1 m is 30 dB, the channel power gain $H_{i,j}$ is modeled as $H_{i,j} = \nu_{i,j}10^{-3}D_{i,j}^{-\delta}$, $\forall i, j = \{0, \cdots, K\}$, $i \ne j$, with $D_{i,j}$ denoting the distance between $U_i$ and $U_j$ in meters, and the pathloss exponent given by $\delta = 2$. We also assume that the short-term fading, denoted by $\nu$, is Rayleigh distributed, in other words, $\nu$ is an exponentially distributed random variable with unit mean. For FD UEs in FD-WPCN-FD, $\varphi_i = 0.03$, $\forall i = 1, \cdots, K$, assuming 15 dB of isolation between transmission and reception. Finally, we set $\Gamma = 9.8$ dB assuming an uncoded quadrature amplitude modulation with the required bit-error rate given by $10^{-7}$ \cite{Goldsmith}.

\begin{figure}[!t]
   \centering
   \includegraphics[width=1.0\columnwidth]{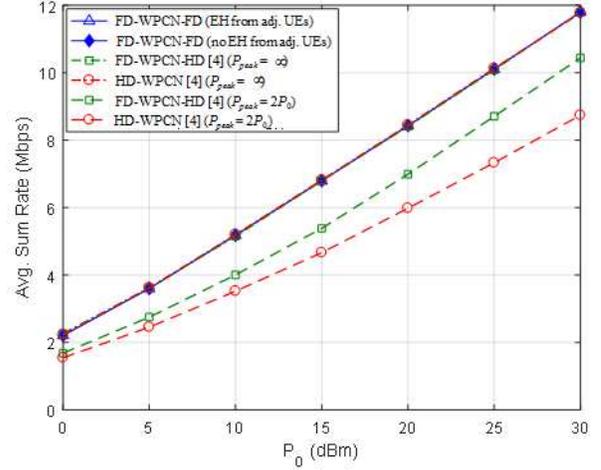}
   \caption{Sum-throughput vs. $P_0$.}
   \label{Fig_SumRate_vs_P0}
\end{figure}

In Fig. \ref{Fig_SumRate_vs_P0}, the maximum sum-throughput of the HD-WPCN, FD-WPCN-HD, and FD-WPCN-FD are compared with respect to the $P_0$ values in dBm. For both ideal and practical scenarios in FD-WPCN-FD, the throughput is obtained by assuming $\alpha = 0.01\beta$ with $\beta$ defined in the previous section, whereas perfect SIC is assumed for the FD-WPCN-HD.

We can see in this figure that in the FD-WPCN-FD, the difference between the sum throughput obtained with and without taking the energy harvested from transmissions of other UEs' into account is negligible. This verifies that the energy harvested from the transmissions of other UEs has little effects on the increase of throughput of FD-WPCN-FD. Furthermore, this implies that the throughput of the FD-WPCN-FD can be optimized even with simpler approach presented in (\ref{Eq_UL_Rate})-(\ref{Eq_Rho_Prac}).

In addition, it is also observed in this figure that even when the energy harvested from other UEs' transmissions is not taken into account, the average sum-throughput of the FD-WPCN-FD is equivalent to those of the HD-WPCN and FD-WPCN-HD with $P_{\rm{peak}} = \infty$. Note that the throughput of HD-WPCN and FD-WPCN-HD with $P_{\rm{peak}} = \infty$ is an ideal throughput that is achievable only theoretically, but not achievable in practice. Given SIC capability at the H-AP which reduces the power of residual SI after SIC sufficiently close to the level of background noise, the FD-WPCN-FD can achieve the throughput equivalent to the theoretically maximum ones of HD-WPCN and FD-WPCN-HD, even with practical settings and simple resource allocation algorithm.

As compared to the FD-WPCN-HD with $P_{\rm{peak}} = 2P_0$, finally, the FD-WPCN-FD achieves a sum-throughput gain of 4 dB with respect to $P_0$. Furthermore, the sum-throughput of the FD-WPCN-FD is shown to increase more quickly with the increase in $P_0$ than that of the HD-WPCN with $P_{\rm{peak}} = 2P_0$.

\begin{figure}[!t]
   \centering
   \includegraphics[width=1.0\columnwidth]{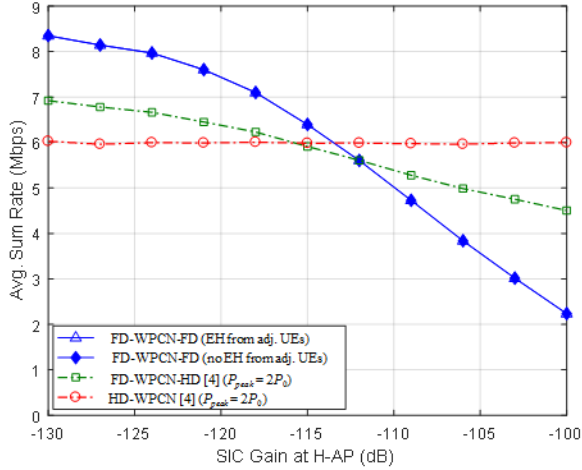}
   \caption{Sum-throughput vs. SIC gain at the H-AP.}
   \label{Fig_SumRate_vs_SIC}
\end{figure}

Fig. \ref{Fig_SumRate_vs_SIC} compares the maximum sum-throughput of the aforementioned WPCNs for different values of SIC gain, where the SIC gain is defined as $1/\alpha$. The sum-throughput of the FD-WPCN-FD is shown to be larger than those of the HD-WPCN and FD-WPCN-HD when SIC gain at the H-AP is larger than 114 dB, in other words, the power of the residual SI after SIC is $4\sigma_0^2$. This implies that the use of the FD UEs shown in Fig. \ref{Fig_UE_Transceiver} efficiently improves the throughput of the WPCN even with current SIC technologies at the H-AP (for example, \cite{KUMU} and \cite{Chea}), which reduce the power of the residual SI sufficiently close to $\sigma_0^2$ after SIC. When $\alpha$ is 120dB, in other words, the aggregated power of residual SI and background noise is $4\sigma_0^2$, FD-WPCN-FD achieves $18\%$ and $25\%$ of throughput gain as compared to FD-WPCN-HD and HD-WPCN, respectively.

\begin{figure}[!t]
   \centering
   \includegraphics[width=1.0\columnwidth]{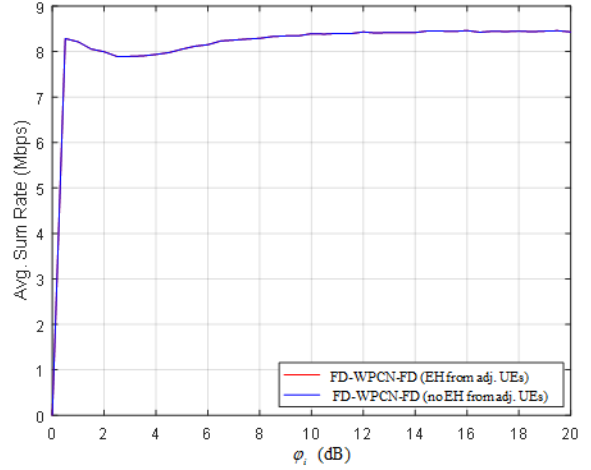}
   \caption{Sum-throughput vs. $\varphi_i$ at the UE.}
   \label{Fig_SumRate_vs_EH}
\end{figure}

Finally, Fig. \ref{Fig_SumRate_vs_EH} shows the throughput of the FD-WPCN-FD with respect to the circulator leakage $\varphi_i$. When $\varphi_i = 0$ dB, throughput of the FD-WPCN-FD is zero because whole energy at PA output leaks into the energy harvester and thus information is not transmitted. When $\varphi_i \ne 0$, a trade-off associated with the value of $\varphi_i$ is observed. Throughput of FD-WPCN FD decreases with increasing $\varphi_i$ when $\varphi_i \le 3$ dB, while it increases with $\varphi_i$ when $\varphi_i \ge 3$ 3 dB. This shows that for increase of throughput in the FD-WPCN-FD, the increase of the harvested energy thanks to larger SI is dominant when $\varphi_i$ is small, while utilizing more energy for WIT with smaller $\varphi_i$ is more beneficial when $\varphi_i$ is large enough although the amount of the harvested energy decreases. However, the effect of the trade-off associated with $\varphi_i$ on the throughput of FD-WPCN-FD is observed to be insignificant as long as $\varphi_i \ne 0$. Finally, the throughput of the FD-WPCN-FD converges when $\varphi_i \ge 14$ dB.

\section{Conclusion} \label{Conclusion}
This paper described a WPCN where both the H-AP and UEs operate in FD mode. Thanks to the FD capability, UEs can receive energy and transmit information simultaneously in the DL and UL, respectively. We first proposed an FD transceiver structure for UEs that enables simultaneous energy reception and information transmission, and then provided a energy usage model in the proposed UE transceiver. Furthermore, an optimal time allocation solution to maximize the weighted sum-throughput of the WPCN with the FD H-AP and FD UEs was obtained for both cases where the energy harvested from other UEs' transmissions is and is not taken into account. It was shown that the use of the proposed FD UEs as well as an FD H-AP with practical SIC capability significantly improves the throughput of the WPCN. In addition, the use of FD H-AP and FD UEs is shwon to make WPCN achieve the throughput which is equivalent to the theoretically maximum ones of conventional HD-WPCN and FD-WPCN-HD, even with practical settings and simple resource allocation algorithm.

\appendices
\section{Proof of Proposition \ref{Proposition_Opt_Time_Alloc}}\label{Proof_Proposition}
From (\ref{Eq_Prop_Primal})-(\ref{Eq_Prop_Constraint2}), we have the Lagrangian of $(\rm{P1})$ as follows.
\begin{equation}\label{App_P1_Lagrangian}
   {\mathcal{L}\left( {\boldsymbol{\tau} ,\lambda } \right) = \sum\limits_{i = 1}^K {\omega_i {R_i}\left( \boldsymbol{\tau}  \right)}  - \lambda \left( {\sum\limits_{i = 1}^K {{\tau _i}} } - 1 \right),}
\end{equation}
with $\lambda \ge 0$ representing the Lagrange multiplier associated with (\ref{Eq_Prop_Constraint1}). We thus have the dual function of ($\rm{P1}$) as follows.
\begin{equation}\label{App_DualFunction}
   {\mathcal{G}\left( \nu  \right) = \mathop {\min }\limits_{\boldsymbol{\tau}  \in \mathcal{D}} \,\,{\mathcal{L}}\left( {\boldsymbol{\tau} ,\lambda } \right),}
\end{equation}
with $\mathcal{D}$ denoting the feasible set of $\boldsymbol{\tau}$ specified by (\ref{Eq_Prop_Constraint1}) and (\ref{Eq_Prop_Constraint2}). Note that Slater's condition holds for (P1) because we can find a $\boldsymbol{\tau} \in \mathcal{D}$, $\tau_i > 0$, $i = 1, \,\, \cdots \,\, K$, with which $\sum\nolimits_{i = 1}^K {{\tau _i} < 1}$. Therefore, $(\rm{P1})$ is shown to be a convex optimization problem for which the strong duality holds, and the globally optimal solution for (P1) thus satisfies the Karush-Kuhn-Tucker (KKT) conditions \cite{Boyd}, given by
\begin{equation}\label{App_OptimalityCondition1}
   {\sum\limits_{i = 1}^{K} {\tau _i^*}  \le 1,}
\end{equation}
\begin{equation}\label{App_OptimalityCondition2}
   {\lambda^* \left( {\sum\limits_{i = 1}^K {{\tau _i^*}} } - 1 \right) = 0,}
\end{equation}
\begin{equation}\label{App_OptimalityCondition3}
   {\frac{\partial }{{\partial {\tau _i}}} \sum\limits_{i = 1}^K {{R_i}\left( \boldsymbol{\tau}^*  \right)} - \lambda^* = 0, \,\,\,\, i = 0, \,\, \cdots, \,\,K,}
\end{equation}
with $\tau_i^*$'s and $\lambda^*$ denoting, respectively, the optimal primal and dual solutions of ($\rm{P1}$). With no loss of generality, we can assume $\lambda^* > 0$ from (\ref{App_OptimalityCondition2}) because $\sum\nolimits_{i = 1}^K {{\tau _i^*} = 1}$ must hold for ($\rm{P1}$). From (\ref{App_OptimalityCondition3}), it then follows that
\begin{equation}\label{App_OptimalityCondition4}
   { \ln \left( 1 + z_i^* \right) } - \frac{z_i^*}{1 + z_i^*} = \frac{\lambda^* \ln 2}{\omega_i}, \,\, i = 1 \,\, \cdots, \,\, K,
\end{equation}
where $z_i^* = \gamma_i(P_0) / \tau_i^*$. Note that $f \left( z \right)$ in (\ref{Eq_function_f}) is a monotonically increasing function of $z$ with $f\left( 0 \right) = 0$ and $f\left( \infty \right) = \infty$. Therefore, ${\boldsymbol{\tau}}^*$ is uniquely obtained for given values of $\gamma_i \left( P_0 \right)$. The proof of Proposition \ref{Proposition_Opt_Time_Alloc} is completed.

\section{Proof of Corollary \ref{Corollary_Time_Alloc_SumRateMax}}\label{Proof_Corollary}
The sum-throughput of this network is maximized by solving (P1) with $\omega_1 = \omega_2 = \cdots, \,\, \omega_K = \frac{1}{k}$. For the inequality in (\ref{App_OptimalityCondition4}) to hold in this case, we should have
\begin{equation}\label{App_TimePortionRatio}
   {\frac{{{\gamma_1}}\left( P_0 \right)}{{\tau _1^{\star}}} = \frac{{{\gamma_2}} \left( P_0 \right)}{{\tau _2^{\star}}} =  \cdots \frac{{{\gamma_K}} \left( P_0 \right)}{{\tau _K^{\star}}}.}
\end{equation}
From (\ref{App_OptimalityCondition2}) and (\ref{App_TimePortionRatio}), it follows that
\begin{equation}\label{Eq_App_SumTime}
      \sum\limits_{i = 1} ^K {\tau_i^{\star}} = \frac{\tau_i^{\star}}{\gamma_i} \sum\limits_{j = 1}^K {\gamma_j} = 1,
\end{equation}
From (\ref{Eq_App_SumTime}), the optimal time allocation solution in (\ref{Eq_Corr_Opt_tau}) is obtained. Corollary \ref{Proposition_Opt_Time_Alloc} is now proved.


\begin{thebibliography}{1}
\bibliographystyle{IEEEbib}

   \bibitem{Ju: HD-WPCN}
      H. Ju and R. Zhang, ``Throughput maximization in wireless powered communication networks,'' \emph{IEEE Trans. Wireless Commun.}, vol. 13, no. 1, pp. 418-428, Jan. 2014.

   \bibitem{KUMU}
      D. Bharadia, E. McMilin, and S. Katti, ``Full duplex radios,'' in \emph{Proc. ACM SIGCOMM}, pp. 375-386, Hong Kong, China, Aug. 2013.

   \bibitem{Chea}
      M. Chung, M. S. Sim, J. Kim, D. K. Kim, and C. -B. Chae, ``Prototyping real-time full duplex raios,'' \emph{IEEE Commun. Magazine}, vol. 53, no. 9, pp. 56-63, Sep. 2015.

   \bibitem{Ju: FD-WPCN}
      H. Ju and R. Zhang, ``Optimal resource allocation in full-duplex wireless powered communication network,'' \emph{IEEE Trans. Commun.}, vol. 62, no. 10, pp. 3528-3540, Oct. 2014.

   \bibitem{FD-WPCN2_SumeSun}
      X. Kang, C. K. Ho, and S. Sun, ``Full-duplex wireless powered communication network with energy causality,'' \emph{IEEE Trans. Wireless Commun.}, vol. 14, no. 10, pp. 5539-5551, Oct. 2015.

   \bibitem{FD-WPCN3_LeeInKyu}
      H. Kim, H. Lee, M. Ahn, H. -B. Kong, and I. Lee, ``Joint subcarrier and power allocation methods in full duplex wireles powered communication networks for OFDM systems,'' \emph{IEEE Trans. Wireless Commun.}, vol. 15, no. 7, pp. 4745-4753, July 2016.

   \bibitem{FD-WPCN4_Rui}
      Y. Zeng and R. Zhang, ``Full-duplex wireless-powered relay with self-interference recycling,'' \emph{IEEE Trans. Wireless Commun.}, vol. 15, no. 7, pp. 4745-4753, July 2016.

   \bibitem{Ju: FD-FD-WPCN}
      H. Ju, K. Chang, and M. -S. Lee, ``In-band full-duplex wireless powered communication networks,'' in \emph{Proc. IEEE International Conf. on Advanced Communi. Tech. (ICACT)}, pp. 23-27, Pyeongchang, Korea, July 2015.

   \bibitem{Day}
      B. P. Day, A. R. Margetts, D. W. Bliss, and P. Shcniter, ``Full-duplex bidirectional MIMO: achievable rates under limited dynamic range,'' \emph{IEEE Trans. Signal Process.}, vol. 60, no. 7, pp. 3702-3716, July 2012.
      
   \bibitem{Goldsmith}
      A. Goldsmith, \emph{Wireless Communications}, Cambridge University Press, 2005.

   \bibitem{Boyd}
      S. Boyd and L. Vandenberghe, \emph{Convex Optimization}, Cambridge University Press, 2004.

   

\end{thebibliography}
\end{document}